\theoremstyle{definition}
\newtheorem{definition}{Definition}[section]
\theoremstyle{plain}
\newtheorem{Theorem}[definition]{Theorem}
\newtheorem{Proposition}[definition]{Proposition}
\newtheorem{Lemma}[definition]{Lemma}
\theoremstyle{remark}
\newtheorem{remark}[definition]{Remark}
\newcommand{\R}{\mathbb R}  
\newcommand{\N}{\mathbb N}
\newcommand{\eps}{\varepsilon}
\newcommand{\Ric}{\mathrm{Ric}}
\newcommand{\gec}{{\check g_\eps}}
\newcommand{\comp}{\Subset}
\newcommand{\sse}{\subseteq}
\newcommand{\conv}{\mathbf{k}}
\newcommand{\cT}{{\mathcal{T}}}
\title{The Penrose singularity theorem in regularity $C^{1,1}$}
\author{Michael Kunzinger\footnote{University of Vienna, Faculty of Mathematics, 
michael.kunzinger@univie.ac.at, roland.steinbauer@univie.ac.at}, 
Roland Steinbauer\footnotemark[\value{footnote}], 
James A.\ Vickers\footnote{University of Southampton, School of Mathematics, J.A.Vickers@maths.soton.ac.uk}}
\begin{document}

\date{\today}


\maketitle

\begin{abstract}
We extend the validity of the Penrose singularity theorem to spacetime metrics of regularity
$C^{1,1}$. The proof is based on regularisation techniques, combined with
recent results in low regularity causality theory.

\vskip 1em

\noindent
\emph{Keywords:} Singularity theorems, low regularity, regularisation,
causality theory
\medskip

\noindent 
\emph{MSC2010:} 83C75, 
                53B30  

\end{abstract}

\section{Introduction}

In 1965 Roger Penrose published his seminal paper \cite{Pen} which established
the first of the modern singularity theorems. In this paper Penrose
introduced the notion of a trapped surface $\cT $,
which he defined as ``a closed spacelike, two-surface with
the property that the two systems of null geodesics which meet $\cT $
orthogonally converge locally in future directions at $\cT $''. He then
showed that if the spacetime $M$ possesses both a closed trapped
surface and a non-compact Cauchy surface then provided the local
energy density is always positive (so that via Einstein's equations
the Ricci tensor satisfies the null convergence condition) the
spacetime cannot be future null complete. The Penrose paper established
for the first time that the gravitational singularity found in the
Schwarzschild solution was not a result of the high degree of
symmetry but that provided the gravitational collapse qualitatively
resembles the spherically symmetric case then (subject to the above
conditions) deviations from spherical symmetry cannot
prevent the formation of a gravitational singularity.

Penrose's paper was not only the first to define the notion of a
trapped surface but it also introduced the idea of using geodesic
incompleteness to give a mathematical characterisation of a singular
spacetime. The 1965 paper had immediate impact and inspired a series
of papers by Hawking, Penrose, Ellis, Geroch and others which led to
the development of modern singularity theorems (see the recent review
paper \cite{SenGar} for details). Despite the great power of these
theorems they follow Penrose in defining singularities in terms of
geodesic incompleteness and as a result say little about the nature of
the singularity. In particular there is nothing in the original
theorems to say
that the gravitational forces become unbounded at the singularity\footnote{
{{See however results on the extendability 
of incomplete spacetimes under suitable curvature conditions, e.g.\ \cite{Cl82, Clarke,Racz,Thorpe},}{ which indicate that such spacetimes cannot be maximal unless the curvature blows up.}}}.
Furthermore the statement and proofs of the various singularity
theorems assume that the metric is at least $C^2$ and Senovilla in
\cite[Sec.\ 6.1]{Seno1} highlights the places where this assumption is
explicitly used. Thus the singularities predicted by the singularity
theorems could in principle be physically innocuous and simply be a
result of the differentiability of the metric dropping below $C^2$. As
emphasised by a number of authors (see e.g.\ \cite{Clarke,MS,Seno1})
the requirement of $C^2$-differentiability is significantly stronger
than one would want since it fails to hold in a number of physically
reasonable situations. In particular it fails across an interface
(such as the surface of a star) where there is a jump in the energy
density which, via the field equations, corresponds to the metric 
being of regularity $C^{1,1}$ (also denoted by $C^{2-}$, the first
derivatives of the metric being locally Lipschitz continuous). For more
details see e.g.\ \cite[Sec.\ 6.1]{Seno1}.
Furthermore from the point of view of the singularity
theorems themselves the natural differentiability class for the
metric again is $C^{1,1}$ as this is the minimal condition which 
ensures existence and uniqueness of geodesics.  
Since the connection of a $C^{1,1}$-metric is locally Lipschitz,
Rademacher's theorem implies that it is differentiable almost
everywhere so that the (Ricci) curvature exists almost everywhere and
is locally bounded. Any further lowering of the differentiability
would result in a loss of uniqueness of causal geodesics\footnote{
In fact, uniqueness is lost for metrics
of local H\"older regularity class $C^{1,\alpha}$ ($\alpha<1$), see \cite{HW}.} (and hence of
the worldlines of observers) and generically in unbounded curvature\footnote{
While the curvature can be stably defined as a distribution even for metrics
of local Sobolev regularity $W^{1,2}\cap L^\infty$ (\cite{GT}) the curvature  will in general 
not be in $L^\infty$ unless the metric is $C^{1,1}=W^{2,\infty}$.},
both of which correspond more closely to our physical expectations of
a gravitational singularity than in the $C^2$-case.

The singularity theorems involve an interplay between results in
differential geometry and causality theory and it is only recently
that the key elements of $C^{1,1}$-causality have been established. In
particular it was only in \cite[Th.\ 1.11]{M} and in \cite[Th.\ 2.1]{KSS} 
that the exponential map was shown to be a bi-Lipschitz homeomorphism, a key
result needed to derive many standard results in causality theory. 
Building on the regularisation results of
\cite{CG,KSSV} and combining them with recent advances in
causality theory \cite{Chrusciel_causality, CG, M, KSSV} the present
authors in \cite{hawkingc11}  gave a detailed proof of the Hawking singularity theorem for
$C^{1,1}$-metrics by following the basic strategy outlined in
\cite[Sec.\ 8.4]{HE}. In the present paper we establish the Penrose
singularity theorem for a $C^{1,1}$-metric. To be precise we prove
the following result:

\begin{Theorem}\label{penrose} Let $(M,g)$ be an $n$-dimensional $C^{1,1}$-spacetime. Assume
\begin{itemize}
\item[(i)] For any Lipschitz-continuous local null vector field $X$,
$\Ric(X,X)\ge 0$.
\item[(ii)] $M$ possesses a non-compact Cauchy-hypersurface $S$.
\item[(iii)] There exists a compact achronal spacelike submanifold $\cT $
in $M$ of codimension $2$ with past-pointing timelike mean curvature vector field $H$.
\end{itemize}
Then $M$ is not future null complete.
\end{Theorem}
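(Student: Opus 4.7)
\medskip

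\noindent\textbf{Proof plan.}
Following the strategy of the authors' earlier $C^{1,1}$-proof of Hawking's theorem, the argument combines the classical Penrose scheme with a regularisation $\gec\to g$ in $C^1$ whose light cones are strictly narrower than those of $g$, together with the now-established low-regularity causality theory. Assume for contradiction that $(M,g)$ is future null complete; we will deduce that $S$ is compact, contradicting (ii).

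\emph{Reduction to compactness of $E^+(\cT)$.} Set $E^+(\cT):=J^+(\cT)\setminus I^+(\cT)\subseteq \partial J^+(\cT)$. Granting $E^+(\cT)$ compact, the classical endgame goes through with routine $C^{1,1}$-adjustments: $E^+(\cT)$ is a closed achronal topological hypersurface; global hyperbolicity of $(M,g)$ (from (ii)), combined with a $C^{1,1}$-version of the Bernal--S\'anchez splitting, yields a homeomorphism $M\cong S\times\R$ along the flow of a complete timelike vector field. The projection of $E^+(\cT)$ to $S$ is then non-empty, compact and open (by invariance of domain), so it exhausts the connected component, and $S$ is compact.

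\emph{Uniform focal-point bound.} It remains to show $E^+(\cT)$ compact, which reduces to producing a uniform $t_0>0$ such that every future-directed $g$-null geodesic $\gamma_{p,\ell}$ starting orthogonally to $\cT$ enters $I^+(\cT)$ by parameter $t_0$. Compactness of $\cT$, continuity of $H$, and (iii) give an initial null expansion $\theta(0)=g(H,\ell)\le -\theta_0<0$ uniformly over $p\in\cT$ and $\ell$ in a compact family of future-null normals. We then lift the analysis to $\gec$: the corresponding $\gec$-null normal geodesics $\gamma^\eps_{p,\ell}$ satisfy Raychaudhuri's equation exactly, and convolution estimates analogous to those of the $C^{1,1}$ Hawking proof, combined with (i), produce an approximate NEC $\Ric_{\gec}(\dot\gamma^\eps,\dot\gamma^\eps)\ge -\delta(\eps)$ with $\delta(\eps)\to 0$ uniformly on compact affine intervals. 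A Riccati comparison then forces a focal point on $\gamma^\eps_{p,\ell}$ at parameter $\le t_0$ independent of $\eps$ and $(p,\ell)$; beyond this, smooth theory places $\gamma^\eps_{p,\ell}(t)\in I^+_{\gec}(\cT)\subseteq I^+_g(\cT)$, the last inclusion holding because $\gec$-causal cones are contained in $g$-causal cones. ODE-stability of geodesics under $\gec\to g$ then transfers the bound to $g$-null geodesics, yielding the compactness of $E^+(\cT)$.

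\emph{Main obstacle.} The delicate step is the uniform approximate-NEC estimate $\Ric_{\gec}(\dot\gamma^\eps,\dot\gamma^\eps)\ge -\delta(\eps)$. It must simultaneously reconcile the differing null cones of $g$ and $\gec$, the differing $\gec$-mean-curvature vectors of $\cT$, and the merely almost-everywhere pointwise nature of (i), all uniformly in the initial data $(p,\ell)$ and in the affine parameter up to $t_0$. Subsidiary but non-trivial tasks--verifying $\gec$-achronality of $\cT$, the $\gec$-Cauchy character of $S$, and the $\eps\to 0$ limit of the geodesic families--should follow from the narrower-cone property of $\gec$ combined with the low-regularity causality results cited above.
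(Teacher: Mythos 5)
Your proposal follows the Penrose strategy in outline, and you correctly isolate the approximate null convergence estimate as the core technical lemma (this matches Lemma~\ref{approxlemma}). However, the direction of regularisation you choose, and what you try to make it do, contains a genuine gap. You regularise from the inside ($\gec\prec g$, narrower light cones) and aim to show that the $g$-horismos $E^+(\cT)$ is compact. The paper instead regularises from the outside ($g\prec\hat g_\eps$, wider light cones, chosen globally hyperbolic via Proposition~\ref{CGrefined}(iv)) and establishes compactness of the $\hat g_\eps$-horismos $E^+_\eps(\cT)$ entirely within \emph{smooth} $\hat g_\eps$-causality theory. This choice of direction is essential for two reasons. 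First, the concluding homeomorphism $\rho:E^+_\eps(\cT)\to S$ along the flow of a $g$-timelike vector field requires $E^+_\eps(\cT)$ to be $g$-achronal; this follows from $\hat g_\eps$-achronality of $\partial J^+_{\hat g_\eps}(\cT)$ precisely because $g\prec\hat g_\eps$ turns $g$-timelike curves into $\hat g_\eps$-timelike curves. With the inner approximation the implication runs the other way, so a $\gec$-achronal set need not be $g$-achronal and injectivity of $\rho$ fails. Second, concluding compactness of the $g$-horismos $E^+(\cT)$ itself would require the $C^{1,1}$-analogue of the statement that $E^+(\cT)$ is swept out by focal-point-free $g$-null normal geodesics; this low-regularity causality result is not available at the point where you need it, and the paper sidesteps it by staying inside the smooth metric $\hat g_\eps$ where the relevant results of \cite{ON83} apply directly.

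There are further soft spots in the transfer step. After a $\gec$-focal point you know $\gamma^\eps_{p,\ell}(t)\in I^+_{\gec}(\cT)\subseteq I^+_g(\cT)$, and you invoke ``ODE-stability'' to pass to $g$-null geodesics. But these points converge, as $\eps\to 0$, to points on the limiting $g$-null geodesic, and a limit of points in the \emph{open} set $I^+_g(\cT)$ can perfectly well lie on its boundary, i.e.\ in $E^+(\cT)$; so the argument as sketched does not force the $g$-null geodesic into $I^+_g(\cT)$. Your subsidiary claim that $S$ retains its Cauchy character for $\gec$ is also not justified: a $\gec$-inextendible $\gec$-timelike curve is $g$-timelike but may be $g$-extendible, so it need not meet $S$ at all. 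Finally, invoking a $C^{1,1}$-version of the Bernal--S\'anchez splitting is much heavier machinery than needed; the paper uses the elementary O'Neill flow argument, which is all that is required once $E^+_\eps(\cT)$ is known to be compact and $g$-achronal. The remedy is to switch to the outer, globally hyperbolic regularisation and run the Raychaudhuri and focal-point argument for $\hat g_\eps$, exactly so that smooth causality theory and the wider-cone implication can be used.
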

For the definition of a $C^{1,1}$-spacetime, see below.

\begin{remark}\label{rem1.2}\ 
\begin{itemize} 
\item[(a)] 
As explained above the Ricci-tensor, $\Ric$, of a $C^{1,1}$-metric is an (almost everywhere defined)
$L^\infty_{\mbox{\scriptsize loc}}$-tensor field. Condition (i) in Theorem \ref{penrose} is adapted
to this situation and reduces to the usual pointwise condition for metrics
of regularity $C^2$. In fact, any null vector can be extended (by parallel transport)
to a local null vector field that is $C^1$ if the metric is $C^2$ and 
locally Lipschitz if $g$ is $C^{1,1}$ (cf.\ also the proof of Lemma \ref{approxlemma} below).
The assumption in (i) then means that the $L^\infty_{\mbox{\scriptsize loc}}$-function 
$\Ric(X,X)$ is non-negative almost everywhere.
Since being a null vector field is not an `open' condition 
(unlike the case of timelike vector fields as in Hawking's singularity theorem,
see \cite[Rem.\ 1.2]{hawkingc11}),
it will in general not be possible to extend a given null vector to a {\em smooth} local null
vector field. 
\item[(b)] Concerning condition (iii), our conventions are as follows
  (cf.\ \cite{ON83}): we define the mean curvature field as
  $H_p=\frac{1}{n-2}\sum_{i=1}^{n-2}\text{II}(e_i,e_i)$ where
  $\{e_i\}$ is any orthonormal basis of $T_p\cT $ and 
the second fundamental form tensor is given by 
  $\text{II}(V,W)=\text{nor}\nabla_V W$ where $\text{nor}$ denotes 
the projection orthogonal to $T_p\cT$. 
Also the condition on $H$ in (iii) is equivalent to the
  convergence $\conv(v):=g(H,v)$ being strictly positive for all
  future pointing null vectors normal to $\cT $ and with our conventions is therefore
  equivalent to the Penrose trapped surface
  condition. 
\end{itemize}
\end{remark}

The key idea behind Penrose's proof of the $C^2$-theorem is to look at
the properties of the boundary of the future of the trapped surface
$\cT $. The boundary $\partial J^+(\cT )$
is generated by null geodesics but Raychaudhury's
equation and the initial trapped surface condition together with the null
convergence condition result in there being a focal point along every
geodesic. This fact together with the assumption of null geodesic
completeness may be used to show that $\partial J^+(\cT )$ is compact. On
the other hand one may use the existence of the Cauchy surface $S$
together with some basic causality theory to construct a homeomorphism
between $\partial J^+(\cT )$ and $S$. This is not possible if $S$ is not
compact so that there must be a contradiction between the four
assumptions.

In our proof of the theorem for the $C^{1,1}$-case we need to further
extend the methods of \cite{CG, KSS, KSSV,hawkingc11} and approximate $g$ by a
smooth family of Lorentzian metrics $\hat g_\eps$ which have strictly wider
lightcones than $g$ and which are themselves globally hyperbolic.  We then show
that by choosing $\eps$ sufficiently small the associated Ricci
tensor, $\Ric_\eps$, violates the null convergence condition by an
arbitrarily small amount, which allows us to establish the compactness
of $\partial J_\eps^+(\cT )=E_\eps^+(\cT )$ under the assumption of 
null geodesic completeness. We then use the global
hyperbolicity of the $\hat g_\eps$ together with the fact that $S$ is
a Cauchy surface for $g$ to show that $E_\eps^+(\cT )$ is homeomorphic to
$S$, which leads to a contradiction with the non-compactness of $S$. 
Finally, in Theorem \ref{penrose_alt} we show that if $M$ is 
future null complete and the assumption that $S$ be non-compact is dropped
in (ii) then $E^+(\cT )$ is a compact Cauchy-hypersurface in
$M$. A main difficulty in these proofs, as compared to the case of 
Hawking's singularity theorem in \cite{hawkingc11} lies in the
fact that curvature conditions on null vectors are less suitable
for approximation arguments (cf.\ Lemma \ref{approxlemma} below)
than conditions on timelike vectors (`timelike' being an `open' condition,
as opposed to `null').

\medskip

In the remainder of this section we fix key notions to be used throughout this
paper, cf.\ also \cite{hawkingc11}. We assume all
manifolds to be of class $C^\infty$ and connected (as well as Hausdorff and second countable), and
only lower the regularity of the metric. By a  $C^{1,1}$- 
(resp.\ $C^k$-, $k\in \N_0$) spacetime $(M,g)$, we mean a smooth manifold $M$
of dimension $n$ endowed with a Lorentzian metric $g$ of
signature $(-+\dots+)$ possessing locally Lipschitz continuous first
derivatives (resp.\ of class $C^k$) and with a time orientation given by a continuous timelike
vector field.   

If $K$ is a compact set in $M$ we write $K\comp M$.
Following \cite{ON83}, we define the curvature tensor  by
$R(X,Y)Z=\nabla_{[X,Y]}Z - [\nabla_X,\nabla_Y]Z$ and the Ricci
tensor by $R_{ab}=R^c{}_{abc}$. Since both of these conventions differ by a sign from
those of \cite{HE}, the respective definitions of Ricci curvature agree. 
Note also that our definition of the convergence 
$\conv$ follows \cite{ON83} and differs by a sign from that used by some other authors. 

Our notation for causal structures will basically follow \cite{ON83},
although as in \cite{Chrusciel_causality,KSSV} we base all
causality notions on locally Lipschitz curves.  Any
locally Lipschitz curve $c$ is differentiable almost everywhere with locally bounded
velocity.  We call $c$ timelike, causal, spacelike or null, if $c'(t)$ has the
corresponding property almost everywhere.  Based on these notions we
define the relative chronological future $I^+(A,U)$ and causal future
$J^+(A,U)$ of a set $A\subseteq M$ relative to $U\subseteq M$ literally as 
in the smooth case (see \cite[Def.\ 3.1]{KSSV} \cite[2.4]{Chrusciel_causality}).
The future horismos of $A$ is defined as $E^+(A,U)=J^+(A,U)\setminus I^+(A,U)$.
As was shown in \cite[Th.\ 7]{M}, \cite[Cor.\ 3.1]{KSSV}, 
our definitions coincide with the ones based on smooth curves.

A Cauchy hypersurface is a
subset $S$ of $M$ which every inextendible timelike curve intersects
exactly once, see \cite[Def.\ 14.28]{ON83}.  In the smooth case,
for spacelike hypersurfaces this definition of a Cauchy hypersurface 
is equivalent to the one in \cite{HE}, and this remains true in the $C^{1,1}$-case \cite[Prop.\  A.31]{hawkingc11}.
A $C^{1,1}$-spacetime $(M,g)$ is called globally hyperbolic if it is strongly causal
and any causal diamond $J(p,q) = J^+(p)\cap J^-(q)$ is compact. 
It follows from \cite[Lem.\ A.20, Th.\ A.22]{hawkingc11} that $M$ is globally hyperbolic if it
possesses a Cauchy-hypersurface.

We will write
$\exp_p$ for the exponential map of the metric $g$ at $p$, and $\exp_p^{g_\eps}$  for
the one corresponding to the metric $g_\eps$.
For a semi-Riemannian submanifold $S$ of $M$ we denote by $(N(S), \pi)$ its normal
bundle. By  \cite[Th.\ 13]{M}, $N(S)$ is a Lipschitz bundle. 

\section{Approximation results} 

In this section we extend the approximation results of
\cite{hawkingc11} to deal with the fact that we need to be able to
approximate a globally hyperbolic $C^{1,1}$-metric by a smooth family
of globally hyperbolic metrics. In addition we require a more delicate
estimate for the Ricci curvature than that given in \cite[Lemma
3.2]{hawkingc11} due to the fact that the Penrose singularity theorem
makes use of the null convergence condition for the Ricci tensor
rather than the timelike convergence condition used in the Hawking
theorem.

We start by recalling from  \cite[Sec.\ 3.8.2]{ladder}, \cite[Sec.\ 1.2]{CG} 
that for two Lorentzian metrics $g_1$,
$g_2$, we say that $g_2$ has \emph{strictly wider light cones} than $g_1$, denoted by 
\begin{equation}
 g_1\prec g_2, \text{ if for any tangent vector } X\not=0,\ g_1(X,X)\le 0 \text{ implies that } g_2(X,X)<0.
\end{equation}
Thus any $g_1$-causal vector is $g_2$-timelike.
The key result now is \cite[Prop.\ 1.2]{CG}, which we give here in the slightly refined
version of  \cite[Prop.\ 2.5]{KSSV}. Note that the smoothness of the approximating net with 
respect to $\eps$ and $p$ is vital in Proposition \ref{CGrefined} below.

\begin{Proposition}\label{CGapprox} Let $(M,g)$ be a $C^0$-spacetime 
and let $h$ be some smooth
background Riemannian metric on $M$. Then for any $\eps>0$, there exist smooth
Lorentzian metrics $\check g_\eps$ and $\hat g_\eps$ on $M$ such that $\check g_\eps
\prec g \prec \hat g_\eps$ and $d_h(\check g_\eps,g) + d_h(\hat g_\eps,g)<\eps$,
where  
\begin{equation}\label{CGdh}
d_h(g_1,g_2) := \sup_{p\in M,0\not=X,Y\in T_pM} \frac{|g_1(X,Y)-g_2(X,Y)|}{\|X\|_h
\|Y\|_h}.
\end{equation}
Moreover, $\hat g_\eps(p)$ and $\check g_\eps(p)$ depend smoothly on $(\eps,p)\in \R^+\times M$, and if
$g\in C^{1,1}$ then letting $g_\eps$ be either $\check g_\eps$ or $\hat g_\eps$,
we additionally have 
\begin{itemize}
 \item[(i)] $g_\eps$ converges to $g$ in the $C^1$-topology as $\eps\to 0$, and
 \item[(ii)] the second derivatives of $g_\eps$ are bounded, uniformly in $\eps$, on compact sets.
 \end{itemize}
\end{Proposition}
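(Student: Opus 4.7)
The plan is to carry out the mollification-based construction of approximating Lorentzian metrics with nested lightcones in the spirit of Chrusciel--Grant \cite{CG} and \cite[Prop.\ 2.5]{KSSV}, paying particular attention to producing a family that is jointly smooth in $(\eps, p)$. I would start by fixing a locally finite atlas $\{(U_\alpha, \phi_\alpha)\}$ of $M$ with $V_\alpha \Subset U_\alpha$ covering $M$, together with a subordinate smooth partition of unity $\{\chi_\alpha\}$. In each chart, push the components of $g$ to $\R^n$, extend them continuously with compact support and convolve with a standard smooth mollifier $\rho_\eps(x) := \eps^{-n}\rho(x/\eps)$; since $(\eps, x) \mapsto \rho_\eps(x)$ is smooth on $\R^+ \times \R^n$, the resulting convolutions $g^{(\alpha)}_\eps$ are jointly smooth in $(\eps, p)$ and converge to $g$ in $C^0_{\text{loc}}$. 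Using a position-dependent smooth rescaling $\eps \mapsto \sigma(p)\eps$ adapted to the atlas, the assembled global metric $g_\eps := \sum_\alpha \chi_\alpha\, g^{(\alpha)}_{\sigma(p)\eps}$ is then smooth in $(\eps, p)$ and satisfies $d_h(g_\eps, g) < \eps/4$ for small $\eps$.

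To obtain strictly nested cones I set
\[
\hat g_\eps := g_\eps - \phi(\eps)\, h, \qquad \check g_\eps := g_\eps + \phi(\eps)\, h,
\]
where $\phi:\R^+ \to \R^+$ is a smooth function with $\phi(\eps) \to 0$ as $\eps \to 0$ and $\phi(\eps) > d_h(g_\eps, g)$ (for instance $\phi(\eps):=\eps/2$ using the previous step). For any $X \neq 0$ with $g(X,X) \le 0$,
\[
\hat g_\eps(X,X) = g(X,X) + (g_\eps - g)(X,X) - \phi(\eps)\, h(X,X) \le \bigl(d_h(g_\eps,g) - \phi(\eps)\bigr)\|X\|_h^2 < 0,
\]
so $g \prec \hat g_\eps$; a symmetric computation yields $\check g_\eps \prec g$. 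For $\eps$ small, the perturbation $\pm \phi(\eps) h$ is small enough to preserve Lorentzian signature, and $d_h(\hat g_\eps, g) + d_h(\check g_\eps, g) < \eps$ follows by the triangle inequality. Joint smoothness of $\hat g_\eps(p)$ and $\check g_\eps(p)$ in $(\eps,p)$ is inherited from that of $g_\eps$, $\phi$ and $h$.

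If in addition $g \in C^{1,1}$, then $\partial g$ is locally Lipschitz and $\partial^2 g \in L^\infty_{\text{loc}}$ by Rademacher's theorem. The commuting identities $\partial^k(g*\rho_\eps) = (\partial^k g)*\rho_\eps$ for $k \le 2$ yield $C^1_{\text{loc}}$ convergence $g^{(\alpha)}_\eps \to g$ and the uniform bound $\|\partial^2 g^{(\alpha)}_\eps\|_{L^\infty(K)} \le \|\partial^2 g\|_{L^\infty(K')}$ on compact subsets; these properties survive summation against the smooth, locally finite partition of unity and the addition of $\pm \phi(\eps) h$, hence transfer to $\check g_\eps$ and $\hat g_\eps$. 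The main technical obstacle lies in the interplay of the first two steps: the single parameter $\eps$ has to simultaneously control the local mollification widths, the global $d_h$-distance to $g$, and the cone-correction $\phi(\eps)$, all while keeping the construction jointly smooth in $(\eps, p)$. The position-dependent rescaling $\sigma(p)$ is the key technical device that reconciles the pointwise estimates required for the strict cone inclusion with joint smoothness in the parameters.
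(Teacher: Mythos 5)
The paper does not actually prove Proposition~\ref{CGapprox}; it quotes it from \cite[Prop.\ 1.2]{CG} in the refined form of \cite[Prop.\ 2.5]{KSSV}, so your proposal is a reconstruction of the cited literature proof rather than of a proof in the paper. Your overall strategy (chartwise mollification with a jointly smooth kernel, assembly via a partition of unity, and a shift by a multiple of $h$ to force strict cone inclusion) is indeed the right family of ideas and matches the spirit of \cite{CG,KSSV}.

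There is, however, a genuine gap in the cone-correction step. You take $\hat g_\eps = g_\eps - \phi(\eps)h$ and $\check g_\eps = g_\eps + \phi(\eps)h$ with $\phi(\eps)$ depending only on $\eps$, not on $p$. On a noncompact $M$ with an \emph{arbitrary} background Riemannian metric $h$, there is no positive lower bound on how ``uniformly Lorentzian'' $g$ is with respect to $h$: the negative $h$-eigenvalue of $g(p)$ can tend to $0$ at infinity. In that case $g_\eps + \phi(\eps)h$ fails to be Lorentzian outside a compact set, no matter how small $\eps$ is. (Concretely, take $M=\R^2$, $g=-e^{-x^2}\,dt^2+dx^2$, $h=dt^2+dx^2$; then $g+\phi h$ is Riemannian wherever $e^{-x^2}<\phi$.) The statement ``for $\eps$ small, the perturbation $\pm\phi(\eps)h$ is small enough to preserve Lorentzian signature'' is therefore false as written. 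Exactly as you already do for the mollification width (via $\sigma(p)$), the cone-correction must also be made position-dependent; in \cite{CG,KSSV} both the mollification parameter and the cone shift are controlled chart-by-chart, with the smooth dependence on $(\eps,p)$ recovered by a gluing lemma (\cite[Lemma 2.4]{KSSV}).

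Two further points deserve explicit attention. First, when assembling $g_\eps=\sum_\alpha\chi_\alpha\,g^{(\alpha)}_\eps$ you are taking convex combinations of symmetric $(0,2)$-tensors, and it must be verified that the result remains Lorentzian and inherits the desired cone relation to $g$; this holds because each $g^{(\alpha)}_\eps$ can be arranged $C^0$-close to $g$ at the relevant points and the nestedness relations $\prec$ and $\succ$ relative to the fixed $g$ are preserved under convex combination, but this needs to be said. Second, in the $C^{1,1}$ case the $C^1$-convergence claim requires one to handle the extra derivative terms coming from differentiating $\chi_\alpha$ and the position-dependent rescaling $\sigma(p)$; this is precisely what makes \cite[Lemma 2.4]{KSSV} necessary, and is worth flagging rather than absorbing into ``these properties survive summation.''
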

\begin{remark}\label{ghstab}
In several places below we will need approximations as in 
Proposition \ref{CGapprox}, but with additional properties. In particular, we will
require that for globally hyperbolic metrics there exist approximations with 
strictly wider lightcones that are themselves globally hyperbolic.
Extending methods of \cite{Ger70}, it was shown in \cite{BM11} that global hyperbolicity is stable
in the interval topology. Consequently, if $g$ is a smooth, globally hyperbolic Lorentzian metric
then there exists some smooth globally hyperbolic metric $g'\succ g$. In \cite[Th.\ 1.2]{FS11}, the 
stability of global hyperbolicity was established for continuous cone structures. It has to be 
noted, however, that the definition of global hyperbolicity in \cite{FS11} requires stable causality
(in addition to the compactness of the causal diamonds),
which is stronger than the usual assumption of strong causality, so this result is not directly
applicable in our setting. In \cite{S14} it is proved directly that if $g$ is a continuous 
metric that is non-totally imprisoning and has the property that all causal diamonds are compact 
(as is the case for any globally hyperbolic $C^{1,1}$-metric by the proof of \cite[Lemma 14.13]{ON83}) 
then there exists a smooth metric $g'\succ g$ that has the same properties, hence in particular is
causal with compact causal diamonds and thereby globally hyperbolic by 
\cite{BS07}.
\end{remark}

\begin{Proposition}\label{CGrefined} Let $(M,g)$ be a $C^0$-spacetime 
with a smooth background Riemannian
metric $h$.
\begin{itemize}
\item[(i)] Let $\gec$, $\hat g_\eps$ as in Proposition \ref{CGapprox}. Then
for any compact subset $K\comp M$ there exists a sequence $\eps_j\searrow 0$ such that
$\hat g_{\eps_{j+1}}\prec \hat g_{\eps_{j}}$ on $K$ 
(resp.\ $\check g_{\eps_{j}}\prec \check g_{\eps_{j+1}}$ on $K$)
for all $j\in \N_0$.
\item[(ii)] If $g'$ is a continuous Lorentzian metric with $g\prec g'$ (resp.\ $g'\prec g$)
then $\hat g_\eps$ (resp.\ $\gec$) as in Proposition \ref{CGapprox} can be chosen such that 
$g\prec \hat g_\eps \prec g'$ (resp.\ $g'\prec \gec \prec g$) for all $\eps$.
\item[(iii)] There exist sequences of smooth Lorentzian metrics $\check g_j\prec g \prec \hat g_{j}$ 
($j\in \N$)
such that $d_h(\check g_j,g) + d_h(\hat g_j,g)<1/j$ and $\check g_j \prec \check g_{j+1}$ as well
as $\hat g_{j+1}\prec \hat g_{j}$ for all $j\in \N$. 
\item[(iv)] If $g$ is $C^{1,1}$ and globally hyperbolic then the $\hat g_\eps$ 
from Proposition \ref{CGapprox}, as well as the
$\hat g_j$ from (iii) can be chosen globally hyperbolic as well. 
\item[(v)] If $g$ is $C^{1,1}$ then the regularizations constructed in (i)--(iv)
can in addition be chosen such that they converge to $g$ in the $C^1$-topology and 
such that their second
derivatives are bounded, uniformly in $\eps$ (resp.\ $j$) on compact sets.
\end{itemize} 
\end{Proposition}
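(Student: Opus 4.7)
The plan is to establish the five parts successively, combining compactness arguments, the $d_h$-estimates from Proposition \ref{CGapprox}, a partition-of-unity rescaling of the approximation parameter, and the stability of global hyperbolicity. I expect part (ii) to carry the main technical weight, as it is where the joint smoothness of the approximation in $(\eps,p)$ is genuinely required.

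For (i), fix $K\comp M$ and argue inductively. Given $\hat g_{\eps_j}$, the continuous function $X\mapsto \hat g_{\eps_j}(X,X)$ is strictly negative on the compact set of $g$-causal $h$-unit vectors over $K$, so by compactness it is bounded above by some $-\delta_j<0$ there. Since the maximum of this function on the superset $\{X:g(X,X)\le \rho,\ \|X\|_h=1,\ \pi(X)\in K\}$ tends to this bound as $\rho\to 0$, I can pick $\rho_j>0$ on which $\hat g_{\eps_j}(X,X)\le -\delta_j/2$. Choosing $\eps_{j+1}<\rho_j$ with $d_h(\hat g_{\eps_{j+1}},g)<\rho_j$ forces any $h$-unit $X$ over $K$ with $\hat g_{\eps_{j+1}}(X,X)\le 0$ to satisfy $g(X,X)<\rho_j$, whence $\hat g_{\eps_j}(X,X)\le -\delta_j/2<0$. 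Homogeneity extends this to all $X\ne 0$, yielding $\hat g_{\eps_{j+1}}\prec \hat g_{\eps_j}$ on $K$; the $\gec$-case is entirely dual.

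Part (ii) is the crux. The gap between $g$ and $g'$ is only locally uniform on a non-compact $M$, so a single $\eps$-scale cannot work globally. I would select a smooth positive $\phi:M\to (0,\infty)$ that quantitatively undercuts this pointwise gap (constructible via a partition of unity subordinate to a locally finite precompact cover on which the gap is controlled), and then redefine $\hat g_\eps(p)$ to be the metric produced by Proposition \ref{CGapprox} at $p$ with parameter $\eps\phi(p)$ rather than $\eps$. The smooth joint $(\eps,p)$-dependence of the original family makes this composition a smooth Lorentzian metric still satisfying $g\prec\hat g_\eps$ and $d_h(\hat g_\eps,g)\to 0$ on compacta, while the choice of $\phi$ guarantees $\hat g_\eps\prec g'$ pointwise on $M$. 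This is the main obstacle: it is precisely the place where the full strength of the smooth dependence in both variables noted in Proposition \ref{CGapprox} is used.

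Part (iii) is then a diagonal argument: exhaust $M$ by compacta $K_1\comp K_2\comp \dots$ with $K_n\sse\mathrm{int}(K_{n+1})$ and iterate (i) on each $K_n$ to extract subsequences that nest monotonically on every $K_n$ from some index on, then diagonalise. For (iv), by the discussion in Remark \ref{ghstab} there exists a smooth globally hyperbolic $g'\succ g$; applying (ii) yields $\hat g_\eps$ with $g\prec\hat g_\eps\prec g'$. Since $\hat g_\eps$-causal vectors are then $g'$-timelike, every $\hat g_\eps$-causal curve is $g'$-causal, so $\hat g_\eps$-causal diamonds sit inside the compact $g'$-diamonds, and strong causality is inherited from $g'$; standard smooth causality theory then gives global hyperbolicity of $\hat g_\eps$. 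Finally, (v) is automatic: the pointwise reparametrization $\eps\mapsto\eps\phi(p)$ and the diagonal extraction preserve both the $C^1$-convergence of Proposition \ref{CGapprox}(i) and the locally uniform bound on the second derivatives of (ii), as on any fixed compact set they amount to a smooth reparametrization of a family already enjoying these properties.
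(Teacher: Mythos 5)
Your treatment of (i), (ii) and (iv) is essentially sound and follows the paper's strategy, though the paper realises (ii) via the gluing lemma \cite[Lemma~2.4]{KSSV} rather than by rescaling the parameter $\eps\mapsto\eps\phi(p)$. The two constructions are both legitimate ways to exploit the joint smoothness in $(\eps,p)$, and your version of (ii) can be made to work (you should additionally enforce $\phi\le 1$ to retain the global bound $d_h(\hat g_\eps,g)<\eps$). However, there are two genuine gaps.

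The first is in (iii). Diagonalising on (i) over an exhaustion $K_1\comp K_2\comp\dots$ does \emph{not} produce a sequence with $\hat g_{j+1}\prec\hat g_j$ on all of $M$. Tracing through the standard diagonal argument, the diagonal sequence $(\eps_j)$ satisfies $\hat g_{\eps_{j+1}}\prec\hat g_{\eps_j}$ only on $K_j$, and there is no control on $M\setminus K_j$; as $j$ varies the set on which the nesting holds grows, but it never covers $M$ at any finite stage. Statement (iii) asserts the strict global ordering $\hat g_{j+1}\prec\hat g_j$ for every $j$, and this really requires iterating part (ii): having constructed $\hat g_j$ with $g\prec\hat g_j$ and $d_h(\hat g_j,g)<1/j$, apply (ii) with $g'=\hat g_j$ to obtain $\hat g_{j+1}$ satisfying $g\prec\hat g_{j+1}\prec\hat g_j$ on all of $M$ together with $d_h(\hat g_{j+1},g)<1/(j+1)$. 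Your approach via (i) cannot be patched to give the global relation.

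The second gap is in (v), where ``automatic'' is claimed for the reparametrised family $\tilde g_\eps(p):=\hat g_{\eps\phi(p)}(p)$. Differentiating in $p$ via the chain rule yields a term $\bigl(\partial_{\eps'}\hat g_{\eps'}(p)\bigr)\big|_{\eps'=\eps\phi(p)}\cdot\eps\,\partial_p\phi(p)$, i.e.\ the $\eps$-derivative of the approximating family enters. Proposition~\ref{CGapprox} asserts only that $\hat g_\eps(p)$ depends smoothly on $(\eps,p)\in\R^+\times M$ and gives $C^1$-convergence and bounded second $p$-derivatives; it gives no bound on $\partial_\eps\hat g_\eps$ as $\eps\searrow 0$, so you cannot conclude that this extra term tends to zero or that the second $p$-derivatives of $\tilde g_\eps$ stay bounded. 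The paper's route via \cite[Lemma~2.4]{KSSV} sidesteps this entirely: the glued $\tilde g_\eps$ \emph{coincides} with the original $\hat g_\eps$ on any given compact set for $\eps$ small, so the $C^1$-convergence and the uniform second-derivative bounds of Proposition~\ref{CGapprox} transfer verbatim. For the sequence in (iii), the paper additionally builds the $C^1$-closeness and the second-derivative bounds into the inductive construction over an exhaustion; this is a further step your ``automatic'' hides.
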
 
\begin{proof} (i) We follow the argument of \cite[Lemma 1.5]{S14}: Pick any $\eps_0>0$. Since $g\prec \hat g_{\eps_0}$,
there exists some $\delta>0$ such that $\{X\in TM|_K\mid \|X\|_h=1,\ g(X,X)<\delta\}$ is contained in
$\{X\in TM\mid  \hat g_{\eps_0}(X,X)< 0\}$. In fact, otherwise there would exist a convergent sequence 
$X_k\to X$ in $TM|_K$ with $\|X_k\|_h=1$, $g(X_k,X_k)<1/k$, and $\hat g_{\eps_0}(X_k,X_k)\ge 0$. But then 
$g(X,X)\le 0$ and $\hat g_{\eps_0}(X,X)\ge 0$, contradicting $g\prec \hat g_{\eps_0}$. Next, we
choose $\eps_1<\min(\eps_0,\delta)$, so $d_h(g,\hat g_{\eps_1})<\delta$. Then if $X\in TM|_K$, $\|X\|_h=1$ and
$\hat g_{\eps_1}(X,X)\le 0$, we obtain $g(X,X)<  \hat g_{\eps_1}(X,X)+\delta \le \delta$, so $\hat g_{\eps_0}(X,X)<0$,
i.e., $\hat g_{\eps_1} \prec \hat g_{\eps_0}$ on $K$. The claim therefore follows by induction. Analogously one can construct
the sequence $\check g_{\eps_j}$. 

\noindent(ii) The proof of (i) shows that for any $K\comp M$ there exists some $\eps_K$ such that for all 
$\eps<\eps_K$ we have $g\prec \hat g_\eps \prec g'$ on $K$, and $d_h(g|_K,\hat g_\eps|_K)<\eps$.
Clearly all these properties are stable under shrinking $K$ or $\eps_K$. Therefore, \cite[Lemma 2.4]{KSSV}
shows that there exists a smooth map $(\eps,p)\mapsto \tilde g_\eps(p)$ such that  for each fixed $\eps$,
$\tilde g_\eps$ is a Lorentzian metric on $M$ with $g\prec \tilde g_\eps \prec g'$ and such that
$d_h(g,\tilde g_\eps)<\eps$ on $M$. Again the proof for $\gec$ is analogous.

\noindent(iii) This follows from (ii) by induction.

\noindent(iv) By Remark \ref{ghstab} there exists a smooth globally hyperbolic metric $g'\succ g$.
Constructing $\hat g_\eps$ resp.\ $\hat g_j$ as in (ii) resp.\ (iii) then automatically gives
globally hyperbolic metrics (cf.\ \cite[Sec.\ II]{BM11} ).

\noindent(v) By \cite[Lemma 2.4]{KSSV}, in the construction given in (ii) above, for any $K\comp M$, 
$\tilde g_\eps$ coincides with the original $\hat g_\eps$ on $K$ for $\eps$ sufficiently small.
Thus by (i) and (ii) from Proposition \ref{CGapprox} the $\tilde g_\eps$ (i.e., the new $\hat g_\eps$)
have the desired properties, and analogously for the new $\check g_\eps$. 
Concerning (iii), fix any atlas $\mathcal A$ of $M$ and an exhaustive sequence $K_n$ of compact
sets in $M$ with $K_n\sse K_{n+1}^\circ$ for all $n$. Then in the inductive construction
of the $\hat g_j$ we may additionally require that the $C^1$-distance of $g$ and $\hat g_j$
on $K_j$ (as measured with respect to the $C^1$-seminorms induced by the charts in $\mathcal A$) 
be less than $1/j$.
Moreover, for any $K_j$ there is some constant $C_j$ bounding  
the second derivatives of the $\hat g_\eps$ from (ii) (again w.r.t.\ the charts in $\mathcal A$)
for $\eps$ smaller than some $\eps_j$. It is therefore also possible to have the
second derivatives of $\hat g_k$ bounded by $C_j$ on $K_j$ for all $k\ge j$. 
Altogether, this gives the claimed properties for the sequence $(\hat g_j)$, and analogously for $(\check g_j)$. 
\end{proof}


\begin{Lemma}\label{approxlemma} Let $(M,g)$ be a $C^{1,1}$-spacetime and 
let $h$, $\tilde h$ be Riemannian metrics on $M$ and $TM$, respectively. 
Suppose that $\Ric(Y,Y)\ge 0$ for every Lipschitz-continuous $g$-null local vector field $Y$.
Let $K\comp M$ and let $C$, $\delta > 0$. Then there exist $\eta>0$ and $\eps_0>0$
such that for all $\eps<\eps_0$ we have: If $p\in K$ and $X\in T_pM$ is such that $\|X\|_h \le C$ 
and there exists a $g$-null vector $Y_0\in TM|_K$ with $d_{\tilde h}(X,Y_0) \le \eta$ and $\|Y_0\|_h\le C$ then
$\Ric_\eps(X,X) > -\delta$.
Here $\Ric_\eps$ is the Ricci-tensor corresponding to a metric $\hat g_\eps$ as in Proposition \ref{CGapprox}.
\end{Lemma}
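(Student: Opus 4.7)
The plan is to argue by contradiction and reduce the estimate to a convolution-type inequality for $\Ric(Z,Z)$ along a Lipschitz null extension of the limiting vector. Suppose the conclusion fails: there exist sequences $\eps_k\searrow 0$, $p_k\in K$, $X_k\in T_{p_k}M$ with $\|X_k\|_h\le C$, and $g$-null $Y_{0,k}\in TM|_K$ with $\|Y_{0,k}\|_h\le C$ and $d_{\tilde h}(X_k,Y_{0,k})\to 0$, yet $\Ric_{\eps_k}(X_k,X_k)\le-\delta$. After extracting subsequences, $p_k\to p\in K$, and both $X_k$ and $Y_{0,k}$ converge (in $TM$) to a common vector $Y_\infty\in T_pM$, which is $g$-null by continuity of $g$ and the nullness of each $Y_{0,k}$.

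Next I would extend $Y_\infty$ to a locally Lipschitz $g$-null vector field $Z$ on a normal neighborhood $U$ of $p$ by $g$-parallel transport along the radial $g$-geodesics emanating from $p$. Since $g\in C^{1,1}$ its Christoffel symbols are locally Lipschitz, so $Z$ is locally Lipschitz; parallel transport preserves the null condition, hence $g(Z,Z)\equiv 0$ on $U$, and the hypothesis yields $\Ric(Z,Z)\ge 0$ almost everywhere on $U$.

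The core step is then to show that $\Ric_{\eps_k}(q)(Z(q),Z(q))\ge-o(1)$ uniformly for $q$ in a neighborhood of $p$ as $k\to\infty$. The regularizations of Proposition \ref{CGapprox} are built from chart-wise convolution with a mollifier $\rho_{\eps}$ together with a small cone-widening perturbation, so in a fixed chart around $p$ one has
\[
\Ric_{\eps_k}(q)_{ab}Z^a(q)Z^b(q)=\int \rho_{\eps_k}(q-y)\,\Ric(y)_{ab}Z^a(y)Z^b(y)\,dy+o(1),
\]
where the $o(1)$ absorbs the lower-order $\Gamma\cdot\Gamma$ contributions to $\Ric_{\eps_k}$ (controlled via the $C^1$-convergence $\hat g_{\eps_k}\to g$ from Proposition \ref{CGapprox}(i)) and the $O(\eps_k)$ error from replacing $Z(y)$ by $Z(q)$ inside the integrand (using the Lipschitz bound on $Z$). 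Since the integrand is almost everywhere non-negative, the integral is $\ge 0$, and the desired lower bound on $\Ric_{\eps_k}(q)(Z(q),Z(q))$ follows.

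To conclude, combine this with the bound $|\Ric_{\eps_k}(X_k,X_k)-\Ric_{\eps_k}(Z(p_k),Z(p_k))|=o(1)$, which is immediate from the uniform $L^\infty_{\mathrm{loc}}$-bound on $\Ric_{\eps_k}$ provided by Proposition \ref{CGapprox}(ii), together with $\|X_k-Z(p_k)\|_{\tilde h}\to 0$ (since $X_k\to Y_\infty=Z(p)$ and $Z$ is continuous). Adding gives $\Ric_{\eps_k}(X_k,X_k)\ge-o(1)$, contradicting $\Ric_{\eps_k}(X_k,X_k)\le-\delta$. The hard part is the convolution comparison in the displayed formula above: it requires more than the abstract properties listed in Proposition \ref{CGapprox} and hinges on the explicit convolution-based construction of the underlying regularization, together with the fact that $Z$ is null for $g$ (rather than for $\hat g_{\eps_k}$), which is precisely what brings the curvature hypothesis into play.
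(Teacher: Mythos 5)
Your proposal is correct and rests on the same two pillars as the paper's proof: extending the relevant null vector to a locally Lipschitz $g$-null vector field by parallel transport along radial geodesics, and transferring the almost-everywhere bound $\Ric(Z,Z)\ge 0$ to $\Ric_\eps$ via the convolution structure of the regularization (using that $R_{\eps ik}-R_{ik}*\rho_\eps\to 0$ locally uniformly, which, as you correctly flag, goes beyond the abstract properties of Proposition \ref{CGapprox} and requires first reducing $\hat g_\eps$ to the plain mollification $g*\rho_\eps$ as in \cite[Lemma 3.2]{hawkingc11}). The two routes diverge only in packaging: you phrase the uniformity over $K$ as a compactness/contradiction argument and keep the Lipschitz field $Z$ inside the convolution integral, controlling the error $\int\rho_{\eps_k}(q-y)R_{ik}(y)\bigl(Z^i(y)Z^k(y)-Z^i(q)Z^k(q)\bigr)\,dy = O(\eps_k)$ via the Lipschitz modulus of $Z$; the paper instead gives a direct quantitative argument, extends the test vector $X$ to a \emph{constant} field near $p$ so that it pulls out of the convolution with no error at all, and establishes $\Ric(X,X)>-\delta/2$ a.e.\ by comparing $X$ with the parallel null field $Y$ on a fixed small ball. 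Your route is slightly cleaner conceptually (no explicit covering of $K$, no explicit constants $\eta_1$, $C_1$, $C_2$) but non-constructive; the paper's route yields explicit control of $\eta$ and $\eps_0$, though this extra information is not actually used downstream. Both are sound.
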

\begin{proof} 
We first note that as in the proof of \cite[Lemma 3.2]{hawkingc11} it follows that we may assume 
that $M=\R^n$, $\|\,.\,\|_h = \|\,.\,\|$ is the Euclidean norm and we may replace 
$\hat g_\eps$ by $g_\eps:=g*\rho_\eps$ 
(component-wise convolution), and prove the claim for $\Ric_\eps$ calculated from $g_\eps$.
For the distance on $TM\cong \R^{2n}$ we may then simply use 
$d(X_p,Y_q) := \|p-q\|+\|X-Y\|$ (which is equivalent to the distance function induced by the
natural product metric on $T\R^n$).

Denote by $E$ the map $v\mapsto (\pi(v),\exp(v))$, defined on an open neighbourhood of the zero
section in $T\R^n$. Let $L$ be a compact neighbourhood of $K$.
Then $E$ is a homeomorphism from some open
neighbourhood $\mathcal U$ of $L\times \{0\}$ in $T\R^n$ onto an open neighbourhood 
$\mathcal V$ of $\{(q,q)\mid q\in L\}$
in $\R^n\times \R^n$ and there exists some $r>0$ such that for any $q\in L$
the set $U_r(q):=\exp_q(B_r(0))$ is a totally normal neighbourhood of $q$ and
$\bigcup_{q\in L} (U_r(q)\times U_r(q))\sse {\mathcal V}$ 
(cf.\ the proof of \cite[Th.\ 4.1]{KSS}). We may assume that $\mathcal U$ is of the form
$\{(q,v)\mid q\in L', \|v\|< a\}$ for some open $L'\supseteq L$ and some $a>0$ and
that $\overline {\mathcal U}$ is contained in the domain of $E$.  
It follows from standard ODE theory
(cf.\ \cite[Sec.\ 2]{KSS}) that 
\begin{equation}\label{geocon1}
\frac{d}{dt}(\exp^{g_\eps}_q(tv)) \to \frac{d}{dt}(\exp_q(tv)) \quad (\eps\to 0),
\end{equation}
uniformly in $v\in \R^n$ with $\|v\|\le 1$,  $t\in [0,a]$, and $q\in L$. Hence for $\eps$ small
and such $v$, $t$ and $q$ and we have 
\begin{equation}\label{geocon2}
\left\|\frac{d}{dt}(\exp_q(tv))\right\|   \le \left\|\frac{d}{dt}(\exp^{g_\eps}_q(tv))\right\| +1.
\end{equation}
Furthermore, for $\eps$ small the operator norms of $T_v\exp_q^{g_\eps}$ are bounded, 
uniformly in $\eps$, $v\in \R^n$ with $\|v\|\le a$ and $q\in L$ by some 
constant $\tilde C_1$: this follows from (7) in \cite{KSS}, noting that we may assume that 
$a$ as above is so small that this estimate is satisfied uniformly in $\eps$,
$\|v\|\le a$, and $q\in L$.
Consequently, for $\eps$ small, $q\in L$, $t\in [0,a]$ and $\|v\|\le 1$ we have
\begin{equation}\label{geocon3}
\left\|\frac{d}{dt}(\exp^{g_\eps}_q(tv))\right\|  = \left\|T_{tv}\exp^{g_\eps}_q(v)\right\| \le \tilde C_1.
\end{equation}
It follows from \eqref{geocon2}, \eqref{geocon3} that there exists some $\eps'>0$ such that for any $\eps\in (0,\eps')$,
any $q\in L$, any $v\in \R^n$ with $\|v\|\le a$  and any $t\in [0,1]$ we have
\begin{equation}\label{geocon4}
\left\|\frac{d}{dt}(\exp_q(tv))\right\| = \left\|\left.\frac{d}{ds}\right|_{s=t\|v\|}
\left(\exp_q\left(s\frac{v}{\|v\|}\right)\right)\right\|
\|v\| \le (\tilde C_1 +1)\|v\|.
\end{equation}
Set 
\begin{equation}\label{c12def}
C_1 := (\tilde C_1 +1)\sup_{p\in L}\|\Gamma(p)\|,\qquad
C_2 :=\sup_{p\in L}\|\Ric(p)\|.
\end{equation}
Given any $C>0$ and $\delta>0$, pick $\eta_1\in (0,1)$ so small that $6C_2C \eta_1<\delta/2$ and let
\begin{equation}\label{rtildef}
\tilde r := \sup\{\|E^{-1}(p,p')\| \mid p,p' \in U_r(q),\, q\in L\}.
\end{equation} 
Then $\tilde r <a$ and by compactness we may suppose that $r$ from above is so small that
$e^{C_1 \tilde r}<2$, $2C_1C\tilde r < \eta_1$, and $U_r(q)\sse L$ for all $q\in K$. 

We may then cover $K$ by finitely many such sets $U_{r}(q_1),\dots,U_{r}(q_N)$.
Then $K=\bigcup_{j=1}^N K_j$ with $K_j\comp U_j:=U_{r}(q_j)$ for each $j$. 
Set $s:=\min_{1\le j\le N}\text{dist}(K_j,\partial U_j)$ 
%
and let $0<\eta<\min(\eta_1,s/2)$.

Next, let $\rho\in {\mathcal D}(\R^n)$ be a standard mollifier, i.e., $\rho\ge 0$, 
$\text{supp}(\rho)\sse B_1(0)$ and $\int \rho(x)\,dx=1$. From (3) in  \cite{hawkingc11} we know that
\begin{equation}
R_{\eps ik} - R_{ik}*\rho_\eps \to 0 \ \text{ uniformly on compact sets}.
\end{equation}
Hence there exists some $\eps'' \in (0,\eps')$ such that for all $0<\eps<\eps''$ we have
\begin{equation}\label{rest}
\sup_{x\in K} |R_{\eps ik}(x) - R_{ik}*\rho_\eps(x)| < \frac{\delta}{2C^2}.
\end{equation}
To conclude our preparations, we set $\eps_0:=\min(\eps'',s/2)$ and consider any $\eps<\eps_0$.

Now let $p\in K$ and $X\in \R^n$ such that $\|X\| \le C$ 
and suppose there exists some $g(q)$-null vector 
$Y_0\in \R^n$ with $q\in K$, 
\begin{equation}
d(X_p,(Y_0)_q) = \|p-q\| + \|X-Y_0\| \le \eta, 
\end{equation}
and $\|Y_0\|\le C$. 
Then for some $j\in \{1,\dots,N\}$ we have $p\in K_j$, and since $\eta<s/2$ we also have
$q\in U_j$.

Since $g(q)(Y_0,Y_0)=0$, 
we may extend $Y_0$ to a Lipschitz-continuous null vector field, denoted by $Y$, on all of $U_j$ by parallelly 
transporting it radially outward from $q$.
Let $p'\in U_j$ be any point different from $q$ and let $v:=\overrightarrow{qp'}
=E^{-1}(q,p')$. Then $Y(p')=Z(1)$, where
$Z(t) = Y(\exp_q(tv))$ for all $t\in [0,1]$ and $Z$ satisfies the linear ODE
\begin{equation}\label{ode}
\frac{dZ^k}{dt} = -\Gamma_{ij}^k(\exp_q(tv))\frac{d}{dt}(\exp_q^i(tv))Z^j(t)
\end{equation}
with initial condition $Z(0)=Y(q)=Y_0$. By Gronwall's inequality it follows that 
\begin{equation}\label{zt}
\|Z(t)\| \le \|Y_0\| e^{t \|\Gamma\|_{L^\infty(U_j)}\sup_{t\in [0,1]}\|\frac{d}{dt}(\exp_q(tv))\| } \quad (t\in [0,1]).
\end{equation}
Therefore, \eqref{geocon4}, \eqref{c12def}, and \eqref{rtildef} give 
\begin{equation}\label{yp}
\|Y(p')\|\le \|Y_0\|e^{C_1\tilde r} < 2 \|Y_0\|
\end{equation}
for all $p'\in U_j$. Moreover, for all $t\in [0,1]$ we have 
\begin{equation}
\|Z(t)-Y_0\|\le t\cdot  \sup_{t\in [0,1]}\left \|\frac{dZ^k}{dt}\right\|,
\end{equation}
which, due to $\|Y_0\|\le C$, by \eqref{ode}, \eqref{zt}, and \eqref{yp} leads to 
\begin{equation}
\|Y(p')-Y_0\|\le \sup_{t\in [0,1]} \left \|\frac{dZ^k}{dt}\right\|\le C_1 C \tilde r e^{C_1\tilde r}
< 2 C_1 C \tilde r  < \eta_1.
\end{equation}
We also extend $X$ to a constant vector field on $U_j$, again denoted by $X$. 
Then $\|Y\| < 2C$ by \eqref{yp}, and 
\begin{equation}
\|X-Y\|\le \|X-Y_0\| + \|Y_0-Y\| < 2\eta_1
\end{equation} 
on $U_j$. 
It follows that, on $U_j$, we have the following inequality 
\begin{equation}
\begin{split}
|\Ric(X,X)-\Ric(Y,Y)| & = |\Ric(X-Y, X)+\Ric(X-Y,Y)|\\
&\le C_2\|X-Y\|\|X\| + C_2\|X-Y\|\|Y\| \le 6C_2C\eta_1 <\delta/2.
\end{split}
\end{equation}
Since $\Ric(Y,Y)\ge 0$, we conclude that $\Ric(X,X)>-\delta/2$ on $U_j$. 

Set
\begin{equation}
\tilde R_{ik}(x) := \left\{
\begin{array}{rl}
	R_{ik}(x) & \text{ for } x\in B_{s/2}(p)\\
	0 & \text{otherwise}.
\end{array}\right.
\end{equation}
By our assumption and the fact that $\rho\ge 0$ we then have $(\tilde R_{ik}X^iX^k)*\rho_\eps\ge -\delta/2$ on $\R^n$.
Furthermore, since $\eps<s/2$ it follows that $(R_{ik}*\rho_\eps)(p) = 
(\tilde R_{ik}*\rho_\eps)(p)$, so \eqref{rest} gives:
\begin{equation}
\begin{aligned}
|R_{\eps ik}(p)X^iX^k - ((\tilde R_{ik}X^iX^k)*\rho_\eps)(p)| &= |(R_{\eps ik}(p) - (R_{ik}*\rho_\eps)(p))X^iX^k| \\
&\le C^2 \sup_{x\in K} |R_{\eps ik}(x) - R_{ik}*\rho_\eps(x)|<\delta/2.\end{aligned}
\end{equation}
It follows that $R_{\eps ik}(p)X^iX^k>-\delta$, as claimed.
\end{proof}

\section{Proof of the main result}\label{mainproof}

Based on the approximation results of the previous section we are now ready to 
prove Theorem \ref{penrose}. As a final preliminary result we need:
\begin{Proposition} \label{eepscomp}
Let $(M,g)$ be a $C^{1,1}$-spacetime that is future null complete and suppose
that assumptions (i) and (iii) of Theorem \ref{penrose} are satisfied.
Moreover, suppose that $\hat g_\eps$ ($\eps>0$) is a net of smooth
Lorentzian metrics on $M$ 
as in Proposition \ref{CGapprox}. 
Then there exists some $\eps_0>0$ such that for all $\eps<\eps_0$ the future horismos
$E_\eps^+(\cT )$ of $\cT $ with respect to the metric $\hat g_\eps$ is relatively compact.
\end{Proposition}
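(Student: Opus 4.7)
The plan is to mimic the classical Penrose focal-point argument within the regularisation: show that each $\hat g_\eps$-null geodesic normal to $\cT$ develops a focal point within a uniformly bounded affine parameter, so that $E_\eps^+(\cT)$ is contained in the image of a compact set under $\exp^{\hat g_\eps}$.

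As setup, I would fix a smooth background Riemannian metric $h$ and let $N_\eps^1(\cT)$ denote the collection of future-pointing $\hat g_\eps$-null vectors normal to $\cT$ with $\|\cdot\|_h = 1$. By \cite[Th.\ 13]{M} the $g$-analogue $N^1(\cT)$ is a compact Lipschitz subbundle, and since $\hat g_\eps \to g$ in $C^1$, for $\eps$ small $N_\eps^1(\cT)$ is compact and each of its elements lies within arbitrarily small Euclidean distance of some $g$-null future-pointing unit normal. Combining $C^1$-convergence with the compactness of $\cT$ and the strict sign of the convergence coming from assumption (iii), one obtains a constant $c>0$ and an $\eps_1>0$ such that the initial expansion of every $\hat g_\eps$-null congruence normally emanating from $\cT$ satisfies $\theta_\eps(0) \le -c$ for all $\eps<\eps_1$.

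Next I would run the Raychaudhuri focal-point analysis for the smooth metric $\hat g_\eps$. For an $\hat g_\eps$-null geodesic $\gamma_\eps$ with $\gamma_\eps'(0) \in N_\eps^1(\cT)$, the expansion $\theta_\eps$ and the shear $\sigma_\eps$ of the associated congruence satisfy
\begin{equation*}
\theta_\eps' + \tfrac{1}{n-2}\theta_\eps^2 + |\sigma_\eps|^2 + \Ric_\eps(\gamma_\eps',\gamma_\eps') = 0.
\end{equation*}
Fix a compact tube $K\comp M$ containing all $\hat g_\eps$-null geodesics normally issuing from $\cT$ on an affine interval $[0,L]$; this is possible for $\eps$ small by the uniform second-derivative bounds of Proposition \ref{CGapprox}(ii) together with the $C^1$-closeness of $\hat g_\eps$-geodesics to $g$-geodesics on compact parameter intervals (as in \cite{KSS}) and the future null completeness of $g$. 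Along such a $\gamma_\eps$, the tangent $\gamma_\eps'(s)$ stays $h$-bounded and remains within small Euclidean distance of a $g$-null vector, so Lemma \ref{approxlemma} applies and, for any prescribed $\delta>0$, yields $\Ric_\eps(\gamma_\eps',\gamma_\eps') > -\delta$ uniformly on $[0,L]$ once $\eps$ is small. A Riccati comparison on the Raychaudhuri ODE with $\theta_\eps(0)\le -c$ then produces, for $\delta$ sufficiently small, an affine parameter $L_0\le L$ independent of $\eps$ and of the chosen initial null normal at which $\theta_\eps \to -\infty$; hence a focal point of $\cT$ along $\gamma_\eps$ occurs no later than $L_0$ (cf.\ \cite[Prop.\ 10.48]{ON83}).

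Assembling the pieces, the compact set $B_\eps := \{sv \mid v\in N_\eps^1(\cT),\ 0\le s\le L_0\}$ is mapped by $\exp^{\hat g_\eps}$ into a compact subset $A_\eps$ of $M$. For the smooth metric $\hat g_\eps$ classical causality theory implies that $E_\eps^+(\cT)$ is generated by null geodesics normal to $\cT$ which leave $E_\eps^+(\cT)$ at or before the first focal point, so $E_\eps^+(\cT) \subseteq A_\eps$ is relatively compact. The main obstacle is ensuring that \emph{every} estimate above is uniform in $\eps$: the initial expansion bound, the Ricci lower bound furnished by Lemma \ref{approxlemma}, and the existence and boundedness of $\gamma_\eps$ on $[0,L_0]$ all hinge on the delicate interplay between $C^1$-convergence $\hat g_\eps\to g$, uniform second-derivative bounds, and the parallel-transport construction underlying Lemma \ref{approxlemma}; it is precisely here that the refined approximation framework of Propositions \ref{CGapprox} and \ref{CGrefined} earns its keep.
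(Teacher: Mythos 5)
Your proposal follows essentially the same approach as the paper: both establish a uniform negative upper bound on the initial null expansion from condition (iii), invoke Lemma~\ref{approxlemma} to get $\Ric_\eps(\gamma_\eps',\gamma_\eps')>-\delta$ along $\hat g_\eps$-normal null geodesics, run a Riccati/Raychaudhuri comparison to force a focal point within an $\eps$-independent affine parameter, and conclude that $E^+_\eps(\cT)$ lies in the $\exp^{\hat g_\eps}$-image of a compact set of normal null initial data. The only real difference is presentational: you argue directly that every $\hat g_\eps$-geodesic focuses uniformly, whereas the paper argues by contradiction, extracting a sequence $\gamma_j$ that converges in $C^1$ to a single $g$-null geodesic $\gamma$ --- a device that makes the uniformity needed to invoke Lemma~\ref{approxlemma} (boundedness of the velocities, containment in a fixed compact set, proximity to $g$-null vectors) automatic via compactness, whereas in your direct formulation these must be verified for the whole family of geodesics at once, as you yourself flag as the delicate point.
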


\begin{proof} Let $h$ be a smooth background Riemannian metric and define
$$
\tilde T := \{v\in N(\cT )\mid v \text{ future-directed } g\text{-null and } h(v,v)=1\},
$$
where $N(\cT )$ is the $g$-normal bundle of $\cT $ and analogously 
$$
\tilde T_\eps := \{v\in N_\eps(\cT )\mid v \text{ future-directed } \hat g_\eps\text{-null and } h(v,v)=1\},
$$
where $N_\eps(\cT )$ is the $\hat g_\eps$-normal bundle of $\cT $. 
Moreover, we set (cf.\ Remark \ref{rem1.2}(b)) 
\begin{equation*}
m:=(n-2)\min_{v\in \tilde T}\conv(v) = (n-2)\min_{v\in \tilde T}g(\pi(v))(H,v) >0
\end{equation*}
and  pick $b>0$ such that $(n-2)/b<m$. 
Denote by $H_\eps$ the mean curvature vector field of $\cT $ with respect to $\hat g_\eps$, and 
similarly for $\conv_\eps$. Then $H_\eps\to H$ uniformly on $\cT $ and we claim that 
for $\eps$ sufficiently small and all $v\in \tilde T_\eps$ we have $\conv_\eps(v)>1/b$.
To see this, suppose to the contrary that there exist a sequence $\eps_k\searrow 0$ and
vectors $v_k\in \tilde T_{\eps_k}$ such that $\hat g_{\eps_k}(\pi(v_k))(H_{\eps_k},v_k)\le 1/b$
for all $k$. By compactness we may suppose without loss of generality that $v_k\to v$
as $k\to \infty$. Then $v\in \tilde T$ but $\conv(v)\le 1/b$, a contradiction.

Now we show that there exists some $\eps_0>0$ such that for all $\eps<\eps_0$
we have 
\begin{equation}\label{relcomp}
E_\eps^+(\cT ) \sse \exp^{\hat g_{\eps}}(\{sv\mid s\in [0,b],\, v\in \tilde T_{\eps}\}) \comp M.
\end{equation}
Again arguing by contradiction, suppose that there exists a sequence $\eps_j\searrow 0$ and
points $q_j\in E_{\eps_j}^+(\cT )\setminus \exp^{\hat g_{\eps_j}}(\{sv\mid s\in [0,b],\, v\in \tilde T_{\eps_j}\})$.
By \cite[Th.\ 10.51, Cor.\ 14.5]{ON83}, for each $j\in \N$ there exists a 
$\hat g_{\eps_j}$-null-geodesic $\gamma_j$ from $\cT $ to $q_j$ which is $\hat g_{\eps_j}$-normal to $\cT $ and
has no focal point before $q_j$. Let
$\gamma_j(t)=\exp^{\hat g_{\eps_j}}(t\tilde v_j)$
with $\tilde v_j\in \tilde T_{\eps_j}$. 
Let $t_j$ be such that $\gamma_j(t_j)=q_j$. Then by our indirect assumption, $t_j>b$ for all $j$.
In particular, each $\gamma_j$ is defined at least on $[0,b]$.

By compactness, we may assume that $\tilde v_j\to \tilde v$ as $j\to \infty$. Then $\tilde v\in \tilde T$, and
we set $\gamma(t):=\exp^g(t\tilde v)$. As $(M,g)$ is future-null complete,
$\gamma$ is defined on $[0,\infty)$. It now follows from standard ODE-results 
(cf.\ \cite[Sec.\ 2]{KSS})
that $\gamma_j\to \gamma$ in the $C^1$-topology on $[0,b]$. 
In particular, $\gamma_j'(t)\to \gamma'(t)$ uniformly on $[0,b]$. Pick $C>0$
and a compact set $K\Subset M$ such that $\|\gamma_j'(t)\|_h\le C$  
and $\gamma_j(t)\in K$ for all $t\in [0,b]$ and all $j\in \N$.
Then by Lemma \ref{approxlemma}, for any $\delta>0$ there exists some $j_0\in \N$ such that 
$\Ric_{\eps_j}(\gamma_j'(t),\gamma_j'(t))>-\delta$ for all $j\ge j_0$ and all $t\in [0,b]$.

Denoting by $\theta_j$ the expansion of $\gamma_j$ we have by the Raychaudhuri equation
\begin{equation}\label{deltaest}
 \frac{d(\theta_j^{-1})}{dt}\geq\frac{1}{n-2}+\frac{1}{\theta_j^2} 
 \Ric_{\hat g_{\eps_j}}({\gamma}'_j,{\gamma}'_j) > \frac{1}{n-2}-\frac{\delta}{\theta_j^2}.
\end{equation}
At this point we fix $\delta>0$ so small that
\begin{equation}\label{bc}
a:=\frac{n-2}{m} < \frac{n-2}{\alpha m} <b,
\end{equation}
where $\alpha:= 1 - (n-2)m^{-2}\delta$ and choose $j_0$ as above for this $\delta$.
For $j\ge j_0$ let $m_j:=(n-2)\min_{v\in \tilde T_{\eps_j}}\conv_{\varepsilon_j}(v)$, then $m_j\to m$ ($j\to \infty$)
and $\alpha_j:= 1 - (n-2)m_j^{-2}\delta\to \alpha$ ($j\to \infty$), so for $j$ large, \eqref{bc} implies 
\begin{equation}\label{9}
 a<\frac{n-2}{\alpha_j m_j} < b.
\end{equation}
Consequently, choosing $j$ so large that $\alpha_j>0$, the right hand side of \eqref{deltaest} is 
strictly positive at $t=0$.
Thus  $\theta_j^{-1}$ is initially strictly increasing and $\theta_j(0)=-(n-2)\conv_j(\gamma_j'(0))<-m_j<0$, so
from \eqref{deltaest} we conclude that $\theta_j^{-1}(t)\in [-m_j^{-1},0)$
on its entire domain of definition. Hence $\theta_j$ has no zero on 
$[0,b]$, whereby $\theta_j^{-1}$ exists on all of $[0,b]$. 
From this, using \eqref{deltaest}, it follows that $\theta_j^{-1}(t)
\ge f_j(t) := -m_j^{-1} + t \frac{\alpha_j}{n-2}$ 
on $[0,b]$. In particular this means that $\theta_j^{-1}$ must go to zero at or before the zero of $f_j$,
i.e., there exists some $\tau\in (0,\frac{n-2}{\alpha_j m_j})$ such that $\theta_j^{-1}(t)\to 0$ as $t\to \tau$.

But for $j$ sufficiently large \eqref{9} implies that $\theta_j^{-1}\to 0$ 
within $[0,b]$. However, since
$\gamma_j$ does not incur a focal point between $t=0$ and $t=t_j>b$,
$\theta_j$ is smooth, hence bounded, on $[0,b]$, a contradiction. 
\end{proof}

\begin{remark}\label{minass} As an inspection of the proofs of Lemma \ref{approxlemma} and Proposition
\ref{eepscomp} shows, both results remain valid for any approximating net $g_\eps$
(or sequence $g_j$) of metrics that satisfy properties (i) and (ii) from Proposition \ref{CGapprox}.
In particular, this applies to the approximations $\check g_\eps$ from the inside.
For the proof of the main result, however, it will be essential to use approximations
from the outside that themselves are globally hyperbolic.
\end{remark}

\noindent{\bf Proof of Theorem \ref{penrose}:}

Suppose, to the contrary, that $M$ is future null complete.
Proposition \ref{eepscomp} applies, in particular, to a net $\hat g_\eps$ as in 
Proposition \ref{CGrefined} (iv), approximating $g$ from the outside and such that each $\hat g_\eps$
is itself globally hyperbolic.

Fix any $\eps<\eps_0$, such that by Proposition \ref{eepscomp} $E^+_\eps(\cT )$ 
is relatively compact. Then since $\hat g_\eps$ is
globally hyperbolic, smooth causality theory (cf.\ the proof of \cite[Th.\ 14.61]{ON83})
implies that $E_{\eps}^+(\cT ) = \partial J^+_{\hat g_{\eps}}(\cT )$
is a topological hypersurface that is $\hat g_{\eps}$-achronal. 
We obtain that $E_{\eps}^+(\cT )$ is compact and since $g\prec \hat g_{\eps}$, it 
is also $g$-achronal.

As in the proof of \cite[Th.\ 14.61]{ON83} let now $X$ be a smooth $g$-timelike vector field
on $M$ and denote by $\rho: E_\eps^+(\cT )\to S$ the map that assigns to each $p\in E_\eps^+(\cT )$
the intersection of the maximal integral curve of $X$ through $p$ with $S$. Then due to the 
achronality of $E_\eps^+(\cT )$, $\rho$ is injective, so by invariance of domain it
is a homeomorphism of $E_\eps^+(\cT )$ onto an open subset of $S$. By compactness this set is
also closed in $S$. But also in the $C^{1,1}$-case, any Cauchy hypersurface is connected
(the proof of \cite[Prop.\ 14.31]{ON83} also works in this regularity).
Thus $\rho(E_\eps^+(\cT ))=S$, contradicting the fact that $S$ is non-compact.
This concludes the proof of Theorem \ref{penrose}. \hspace*{\fill}$\Box$\medskip

We also have the following analogue of \cite[Th.\ 14.61]{ON83}:
\begin{Theorem}\label{penrose_alt} Let $(M,g)$ be an $n$-dimensional $C^{1,1}$-spacetime. 
Assume that
\begin{itemize}
\item[(i)] For any Lipschitz-continuous local null vector field $X$,
$\Ric(X,X)\ge 0$.
\item[(ii)] $M$ possesses a Cauchy-hypersurface $S$.
\item[(iii)] There exists a compact spacelike achronal submanifold $\cT $
in $M$ of codimension $2$ with past-pointing timelike mean curvature vector field $H$.
\item[(iv)] $M$ is future null complete.
\end{itemize}
Then the future horismos of $\cT $, $E^+(\cT )$, is a compact Cauchy-hypersurface in $M$.
\end{Theorem}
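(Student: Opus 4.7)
My plan is to mirror the proof of Theorem \ref{penrose}, but to reorganise its final steps so that one extracts the compactness and Cauchy property of $E^+(\cT)$ in place of a contradiction. First, I would invoke Proposition \ref{CGrefined}(iv) to obtain a net of smooth globally hyperbolic metrics $\hat g_\eps \succ g$ converging to $g$ in $C^1$ with uniformly bounded second derivatives, and apply Proposition \ref{eepscomp} to conclude that $E^+_\eps(\cT)$ is relatively compact for $\eps<\eps_0$. The argument of Theorem \ref{penrose} then shows, with no changes, that each $E^+_\eps(\cT)$ is a compact $g$-achronal topological hypersurface and that the flow projection $\rho:M\to S$ associated to a smooth $g$-timelike vector field $X$ restricts on $E^+_\eps(\cT)$ to a homeomorphism onto a clopen subset of $S$. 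Since $S$ is connected (the proof of \cite[Prop.\ 14.31]{ON83} carries over to the $C^{1,1}$ regularity), this subset equals $S$; in particular $S$ is compact.

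Next I would show that $\rho$ restricted to $E^+(\cT)$ is also a homeomorphism onto $S$, whence $E^+(\cT)$ is compact. Injectivity is immediate from the $g$-achronality of $\partial J^+(\cT)$, continuity is trivial, and openness follows from invariance of domain once one uses that $\partial J^+(\cT)$ is a Lipschitz topological hypersurface in the $C^{1,1}$-globally hyperbolic setting \cite{Chrusciel_causality,KSSV}. For surjectivity, fix $p\in S$ and let $\sigma_p$ denote the $X$-orbit through $p$. For every $\eps<\eps_0$ there is a unique point $q_\eps\in \sigma_p\cap E^+_\eps(\cT)$, which by Proposition \ref{eepscomp} can be written as $q_\eps=\exp^{\hat g_\eps}(s_\eps v_\eps)$ with $s_\eps\in[0,b]$ and $v_\eps\in\tilde T_\eps$. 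Along a subsequence $\eps_j\searrow 0$, compactness yields $(s_{\eps_j},v_{\eps_j})\to (s^*,v^*)$ with $v^*\in\tilde T$, and uniform ODE stability of the exponential maps (made possible by the uniform second-derivative bounds of Proposition \ref{CGapprox}(ii)) gives $q_{\eps_j}\to q_\infty=\exp_g(s^*v^*)\in J^+(\cT)\cap \sigma_p$. Moreover, $q_\infty\notin I^+(\cT)$: otherwise openness of $I^+(\cT)\subseteq I^+_{\hat g_{\eps_j}}(\cT)$ would place $q_{\eps_j}$ in $I^+_{\hat g_{\eps_j}}(\cT)$ for large $j$, contradicting $q_{\eps_j}\in E^+_{\eps_j}(\cT)$. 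Hence $q_\infty\in E^+(\cT)\cap \sigma_p$, establishing surjectivity.

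Finally, since $\partial J^+(\cT)$ is now known to be compact, achronal, and edgeless (edgelessness transferring directly from the smooth case), the $C^{1,1}$-adaptation of \cite[Th.\ 14.61]{ON83} based on the causal theory of \cite{KSSV,hawkingc11} gives that $E^+(\cT)$ is a Cauchy hypersurface. The main obstacle in this plan is the identification of the limit $q_\infty$ with a point of $E^+(\cT)$: the inclusion $q_\infty\in J^+(\cT)$ depends crucially on the exponential-map description of $E^+_\eps(\cT)$ provided by Proposition \ref{eepscomp} together with the uniform $C^1$-convergence $\exp^{\hat g_\eps}\to \exp_g$ afforded by Proposition \ref{CGapprox}(ii), while $q_\infty\notin I^+(\cT)$ is a short openness argument exploiting $g\prec \hat g_\eps$.
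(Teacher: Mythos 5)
Your plan inverts the paper's order: you first extract compactness of $E^+(\cT)$ via a surjectivity argument for the flow projection $\rho$, and only then deduce the Cauchy property; the paper first establishes the Cauchy property by a contradiction argument and then obtains compactness as a consequence. The first half of your plan (approximating, the compactness of $E^+_\eps(\cT)$, the homeomorphism $E^+_\eps(\cT)\cong S$, and the surjectivity of $\rho|_{E^+(\cT)}$ via a limit of $q_\eps\in\sigma_p\cap E^+_\eps(\cT)$) is workable and closely parallels the computations in the paper's proof, though you leave two details implicit: you assert $q_\infty\in\sigma_p$ without noting that this requires the orbit parameters $t_j$ to stay bounded, which uses non-total-imprisonment of $(M,g)$ (the paper invokes the $C^{1,1}$ version of \cite[Lemma 14.13]{ON83} at exactly this point); and one should also note that $q_\infty\notin I^+(\cT)$ needs the $C^{1,1}$ openness fact $I^+(\cT)\sse I^+_{\hat g_{\eps_j}}(\cT)$ from \cite[Cor.\ 3.12]{KSSV}, which you state but should cite.

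The genuine gap is the final step. You invoke ``the $C^{1,1}$-adaptation of \cite[Th.\ 14.61]{ON83}'' to pass from $E^+(\cT)$ being compact, achronal and edgeless to its being a Cauchy hypersurface — but \cite[Th.\ 14.61]{ON83} \emph{is} the Penrose theorem whose $C^{1,1}$-analogue you are proving, so this citation is circular. What you actually need is a separate lemma of the form: in a globally hyperbolic $C^{1,1}$-spacetime, a compact achronal edgeless topological hypersurface is a Cauchy hypersurface. That lemma is true but requires proof: one uses the partition $M=I^+(E^+(\cT))\sqcup E^+(\cT)\sqcup I^-(E^+(\cT))$ (which follows from your flow-surjectivity), then, given an inextendible timelike curve $\gamma$ missing $E^+(\cT)$ and meeting $S$ at $p$, traps (say) its past portion in the compact set $J^-(p)\cap J^+(\cT)$ and contradicts non-imprisonment. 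The paper sidesteps this entirely: it assumes an inextendible timelike curve $\tilde\alpha$ disjoint from $E^+(\cT)$, replaces it with a $C^2$ integral curve $\alpha$ of a timelike field $X$ \emph{tailored to that curve} (via \cite[Ex.\ 14.11]{ON83}), shows $\alpha$ meets every $E_j^+(\cT)$, passes to a limit curve to produce a point of $E^+(\cT)$ on $\alpha$, and only afterwards deduces compactness from $\rho$ being a homeomorphism onto the now-known Cauchy surface. Either fill in the missing lemma with its non-imprisonment argument, or restructure to let $X$ depend on the candidate inextendible curve as the paper does.
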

\begin{proof} Since $(M,g)$ is globally hyperbolic, 
\cite[Prop.\ A.28]{hawkingc11} implies that the causality relation $\le$ on $M$
is closed. 
Thus since $\cT $ is compact it follows that $J^+(\cT )$ is closed. Also, by \cite[Cor.\ 3.16]{KSSV},
$J^+(\cT )^\circ=I^+(\cT )$, so $E^+(\cT )=\partial J^+(\cT )$. It is thereby the topological boundary of a
future set and the proof of \cite[Cor.\ 14.27]{ON83} carries over to the $C^{1,1}$-setting
(using \cite[Th.\ A.1, Prop.\ A.18]{hawkingc11}) to show
that $E^+(\cT )$ is a closed achronal topological hypersurface. It
remains to show that any inextendible
timelike curve intersects it. 

Suppose to the contrary that there exists some inextendible
timelike (locally Lipschitz) curve $\tilde \alpha$ that is disjoint from
$E^+(\cT )$. Then as in (the proof of) \cite[Lemma A.10]{hawkingc11} we may also
construct an inextendible timelike $C^2$-curve $\alpha$ 
that does not meet $E^+(\cT )$ (round off the breakpoints of the piecewise 
geodesic obtained in \cite[Lemma A.10]{hawkingc11} in a timelike way). 
By \cite[Ex.\ 14.11]{ON83}, since $(M,g)$ is strongly causal, $\alpha$ is an
integral curve of a timelike $C^1$-vector field $X$ on $M$. 

Next, let $\hat g_j$ be an approximating net as in  
Proposition \ref{CGrefined} (iv),(v) (to which thereby all arguments from the proof
of Theorem \ref{penrose} apply, cf.\ Remark \ref{minass}). Denote by $I^+_j(\cT )$, $J^+_j(\cT )$,
$E^+_j(\cT )$ the chronological and causal future, and the future horismos, respectively, of $\cT $
with respect to $\hat g_j$. 
Set $K:=\{sv\mid s\in [0,b],\, v\in TM|_\cT ,\, \|v\|_h=1\}\comp TM$, 
where $h$ is some complete smooth Riemannian background metric on $M$. It then follows
from the locally uniform convergence of $\exp^{\hat g_j}$ to $\exp^g$, together with 
\eqref{relcomp} that there exists some $j_0\in \N$ such that for $j\ge j_0$ we have
\begin{equation}
\partial J_j^+(\cT ) = E_j^+(\cT )\sse \exp^{\hat g_j}(K)\sse 
\overline{\{p\in M\mid \text{dist}_h(p,\exp^g(K))\le 1\}}=:L\comp M.
\end{equation}
Let the map $\rho$ from the proof of Theorem \ref{penrose} be constructed from the 
vector field $X$ from above. Then by the proof of Theorem \ref{penrose} 
we may additionally suppose that $j_0$ is such that, for 
each $j\ge j_0$, $E_j^+(\cT )$ is a compact achronal topological hypersurface
in $(M,g)$ that is homeomorphic
via $\rho$ to $S$. Therefore $\alpha$ (which is timelike for all $\hat
g_j$) intersects every $E^+_j(\cT )$ ($j\ge j_0$) precisely
once. Let $q_j$ be the intersection
point of $\alpha$ with $\partial J_{j}^+(\cT )=E^+_{j}(\cT )$. 
We now pick $t_j$ such that $q_j=\alpha(t_j)$ for all $j\in \N$. Each
$q_j$ is contained in $L$, so since $(M,g)$ is globally hyperbolic, hence
non-partially-imprisoning (as already noted in Rem.\ \ref{ghstab}, the proof of \cite[Lemma 14.13]{ON83} 
carries over verbatim to the $C^{1,1}$-case), 
it follows that $(t_j)$ is a bounded sequence in $\R$ and without loss of 
generality we may suppose that
in fact $t_j\to \tau$ for some $\tau \in \R$. Then also $q_j=\alpha(t_j)\to
q=\alpha(\tau)\in L$.

As $q_j\in \partial J_{j}^+(\cT )$ there exist $p_j\in \cT $ and $\hat g_{j}$-causal curves
$\beta_j$ from $p_j$ to $q_j$ (in fact, the $\beta_j$ are $\hat g_j$-normal 
$\hat g_j$-null geodesics). Again
without loss of generality we may assume that $p_j\to p\in \cT $.
By \cite[Th.\ 3.1]{Minguzzicurves} (or \cite[Prop.\ 2.8.1]{Chrusciel_causality}) there exists an 
accumulation curve $\beta$ of the sequence $\beta_j$ such that $\beta$ goes from $p$ to $q$.
Moreover, since $\hat g_{j+1}\prec \hat g_j$ for all $j$, each
$\beta_k$ is $\hat g_{j}$-causal for all $k\ge j$. Therefore, $\beta$
is $\hat g_{j}$-causal for each $j$. Thus by (the proof of) \cite[Prop.\ 1.5]{CG},
$\beta$ is $g$-causal and we conclude that $q=\alpha(\tau)\in J^+(\cT )$. If we had $q\in I^+(\cT )$
then for some $j_1$ we would also have $q_j\in I^+(\cT )\sse I^+_{j}(\cT )$ for all $j\ge j_1$
(using \cite[Cor.\ 3.12]{KSSV}). But this
is impossible since $q_j\in \partial J^+_{j}(\cT )=E^+_{j}(\cT )$.
Thus 
\begin{equation}
q=\alpha(\tau)\in E^+(\cT ),
\end{equation}
a contradiction to our initial assumption. We conclude that $E^+(\cT )$ is indeed a Cauchy-hypersurface in $M$.

Finally, as in the proof of Theorem \ref{penrose}, the map $\rho$ is a homeomorphism
from $E_j^+(\cT )$ onto $E^+(\cT )$ (for $j\ge j_0$), so $E^+(\cT )$ is compact.
\end{proof}

In particular, as in \cite[Cor.\ B of Th.\ 14.61]{ON83} it follows that if (i), (ii)
and (iii) from Theorem \ref{penrose_alt} hold and there exists some inextendible
causal curve that does not meet $E^+(\cT )$ then $(M,g)$ is future null incomplete.
Indeed by \cite[Lemma A.20]{hawkingc11} the existence of such a curve shows that
$E^+(\cT )$ cannot be a Cauchy-hypersurface.

\medskip\noindent
{\bf Acknowledgements.}  We would like to thank Clemens S\"amann for helpful discussions.  
This work was supported by FWF-projects P23714 and P25326.

\end{document}